\DeclareMathAlphabet{\pazocal}{OMS}{zplm}{m}{n}
\DeclareMathOperator*{\argmin}{arg\,min}
\algnewcommand\INPUT{\item[\textbf{Input:}]}%
\algnewcommand\OUTPUT{\item[\textbf{Output:}]}%
\newcommand{\Cross}{\mathbin{\tikz [x=1.4ex,y=1.4ex,line width=.1ex] \draw (0,0) -- (1,1) (0,1) -- (1,0);}}%
\newtheorem{theorem}{Theorem}[section]
\newtheorem{prop}{Proposition}[section]
\newtheorem{definition}{Definition}[section]
\begin{document}

\onehalfspacing
\title{Online Computation of Terminal Ingredients in Distributed Model Predictive Control for Reference Tracking *}
\author{Ahmed Aboudonia$^{1}$, Goran Banjac$^{1}$, Annika Eichler$^{2}$, and John Lygeros$^{1}$}
\date{}

\maketitle
\thispagestyle{empty}
\pagestyle{empty}

\begin{abstract}

A distributed model predictive control scheme is developed for tracking piecewise constant references where the terminal set is reconfigured online, whereas the terminal controller is computed offline. Unlike many standard existing schemes, this scheme yields large feasible regions without performing offline centralized computations. Although the resulting optimal control problem (OCP) is a semidefinite program (SDP), an SDP scalability method based on diagonal dominance is used to approximate the derived SDP by a second-order cone program. The OCPs of the proposed scheme and its approximation are amenable to distributed optimization. Both schemes are evaluated using a power network example and compared to a scheme where the terminal controller is reconfigured online as well. It is found that fixing the terminal controller results in better performance, noticeable reduction in computational cost and similar feasible region compared to the case in which this controller is reconfigured online.
\end{abstract}

\let\thefootnote\relax\footnote{* This work is supported by the European Research Council under the ERC Advanced Grant agreement no. 787845 (OCAL) and the Swiss National Science Foundation under NCCR Automation.}
\let\thefootnote\relax\footnote{$^{1}$Ahmed Aboudonia, Goran Banjac and John Lygeros are with the Automatic Control Laboratory, Department of Electrical Engineering and Information Technology, ETH Zurich, 8092 Zurich, Switzerland {\tt\small $\{$ahmedab,gbanjac,lygeros$\}$@control.ee.ethz.ch}}
\let\thefootnote\relax\footnote{$^{2}$Annika Eichler is with the Deutsches Elektronen-Synchroton DESY, 22607 Hamburg, Germany {\tt\small annika.eichler@desy.de}}

\section{INTRODUCTION}

Control of interconnected systems is an active area of research due to its wide variety of applications \cite{maestre2014distributed}. Various control techniques have been developed to control such systems \cite{scattolini2009architectures}. Among these techniques is Model Predictive Control (MPC), that aims to optimize performance while ensuring stability and constraint satisfaction \cite{rawlings2017model}. Various efforts have been devoted to developing distributed MPC schemes for interconnected systems \cite{christofides2013distributed}. In these schemes, the system is decomposed into several smaller coupled subsystems, each of which has a local controller which can share information with a set of other local controllers.

To ensure stability and constraint satisfaction at all times, many MPC schemes, including distributed schemes, either use sufficiently long horizons or are equipped with terminal ingredients computed offline (e.g. see \cite{kouvaritakis2016model,ferramosca2013cooperative,conte2016distributed}).
While the former may result in high computational cost \cite{kouvaritakis2016model}, the latter may require offline centralized computations \cite{ferramosca2013cooperative} or lead to small feasible regions \cite{conte2016distributed} and hence, deteriorate the closed-loop performance. To circumvent these challenge, several studies have considered computing terminal ingredients for distributed MPC online \cite{trodden2017distributed,lucia2015contract,aboudonia2020distributed}. A distributed MPC scheme with reconfigurable terminal ingredients for piecewise constant reference tracking was proposed in \cite{aboudonia2021distributed} where the terminal cost, controller and set are updated online taking into consideration the current state of the system. The numerical study in \cite{aboudonia2021distributed} shows that this scheme can outperform standard distributed MPC schemes.


In this paper, we develop a distributed MPC scheme for piecewise constant reference tracking where the terminal set and cost are reconfigured online. Unlike \cite{aboudonia2021distributed}, however, the terminal controller is computed offline and no longer considered a decision variable in the online optimal control problem (OCP). The resulting OCP has fewer decision variables in this case, but also fewer constraints; in particular, the constraints required to ensure the stability of the terminal dynamics are no longer added in the OCP and the terminal set is no longer required to be centered around the reference trajectory. Furthermore, constraint satisfaction inside the terminal set is ensured by a set of linear inequalities instead of linear matrix inequalities (LMIs). The flexibility of the terminal set center maintains a large MPC feasible region. Moreover, the reduction in the number of decision variables and constraints (LMIs in particular) results in a remarkable reduction in the online computational cost.
Hence, when solving the OCP using distributed optimization techniques such as the alternating direction method of multipliers (ADMM), more iterations can be performed within the available sampling time, leading to better convergence to the optimal solution. 
We evaluate the efficacy of the proposed approach using a power network case study and observe that both computation and suboptimality with respect to centralized solutions are reduced compared to \cite{aboudonia2021distributed}, while the set of initial conditions for which the OCP is feasible remains similar.

In Section~\ref{sec:problem}, we formulate the tracking distributed MPC problem. In Section~\ref{sec:mpc}, we show how to convert the resulting OCP into a semidefinite program (SDP) and how to approximate the SDP by a second order cone program (SOCP). In Section~\ref{sec:implementation}, we show how the resulting convex OCPs can be solved by ADMM. We explore the power network example in Section~\ref{ref:simulations} followed by some concluding remarks in Section~\ref{conclusion}.

\section{Problem Formulation}\label{sec:problem}

We consider interconnected systems which can be decomposed into $M$ coupled subsystems. Two subsystems are neighbors if the state of one appears in the dynamics of the other, or their states appear jointly in one of the state constraints defined below in \eqref{sec2_cons}. The set of neighbors of the $i^{\text{th}}$ subsystem is denoted as $\pazocal{N}_i$; by convention we assume that $i \in \pazocal{N}_i$ for all $i =\{1, \ldots, M\}$. We let $x_i \in \mathbb{R}^{n_i}$ and $u_i \in \mathbb{R}^{m_i}$ denote the states and inputs of the $i^{\text{th}}$ subsystem. The dynamics of the $i^{\text{th}}$ subsystem is given by
\begin{equation}
\label{sec2_dyn}
x_i(t+1) = A_ix_{N_i}(t)+B_i u_i(t),
\end{equation}
where $A_i \in \mathbb{R}^{n_i \times n_{N_i}}$, $B_i \in \mathbb{R}^{n_i \times m_i}$ and $x_{N_i} \in \mathbb{R}^{n_{N_i}}$ is a concatenated vector including the states of the subsystems in the set $\pazocal{N}_i$. For $j \in \pazocal{N}_i$, the local states of the $j^{\text{th}}$ subsystem can be extracted from the vector $x_{N_i}$ using a binary matrix $W_{ij} \in \{0,1\}^{ n_j \times n_{N_i}}$ where $x_j = W_{ij} x_{N_i}$. The $i^{\text{th}}$ subsystem is subject to state and input constraints given by
\begin{equation}
\label{sec2_cons}
\begin{aligned}
x_{N_i}(t) \in \pazocal{X}_{\pazocal{N}_i} &= \{ x_{N_i} \in \mathbb{R}^{n_{N_i}}: G_i x_{N_i} \leq g_i \}, \\
u_i(t) \in \pazocal{U}_i &= \{ u_i \in \mathbb{R}^{m_i}: H_i u_i \leq h_i \}, \\	
\end{aligned}
\end{equation}
where $G_i \in \mathbb{R}^{q_i \times n_{N_i}}$, $H_i \in \mathbb{R}^{r_i \times m_i}$, $g_i \in \mathbb{R}^{q_i}$ and $h_i \in \mathbb{R}^{r_i}$. Note that $x_i$, $x_{N_i}$ and $u_i$ are decision variables for the $i^{\text{th}}$ subsystem. Without loss of generality, the subsystems are assumed to be coupled only through the states; in the presence of coupled inputs, auxiliary variables can be introduced to ensure this assumption is met as in \cite{darivianakis2019distributed}.

For the $i^{\text{th}}$ subsystem, we consider the local cost function
$J_i \coloneqq \sum_{t=0}^{T-1} \left(
\|x_{N_i}(t)-x_{e_{N_i}}\|^2_{Q_i} + \right.$ $ \left. \|u_i(t)-u_{e_i}\|^2_{R_i} \right)
+ \|x_i(T)-x_{e_i}\|^2_{P_i} + \|x_{e_i}-x_{r_i}\|^2_{S_i}$
where $Q_i \in \mathbb{S}_{++}^{n_{N_i}}$, $R_i \in \mathbb{S}_{++}^{m_i}$, $P_i  \in \mathbb{S}_{++}^{n_i}$, $S_i \in \mathbb{S}_{++}^{n_i}$, $\mathbb{S}_{++}^n$ refers to the set of $n$-by-$n$ symmetric positive definite matrices, $T$ is the prediction horizon, $x_{e_i}$ and $u_{e_i}$ are decision variables\footnote{We use the term ``decision variable’’ to indicate quantities determined by the online optimization problem solved once the initial state has been measured. This is in contrast to quantities like $P_i$, $Q_i$, and $K_i$ (below) that are offline design choices.} defining an equilibrium point for the $i^{\text{th}}$ subsystem, $x_{e_{{N}_i}}$ is a decision variable including the states corresponding to the equilibrium points of the subsystems in the set $\pazocal{N}_i$ and $x_{r_i}$ is the reference point of the $i^{\text{th}}$ subsystem.
We assume that the equilibrium point of the $i^{\text{th}}$ subsystem satisfies 
\begin{equation}
\label{sec2_eq}
	x_{e_i} = A_i x_{e_{N_i}} +B_i u_{e_i}, \quad u_{e_i} = K_i x_{e_{N_i}} + d_i.
\end{equation}
where $K_i \in \mathbb{R}^{m_i \times n_{N_i}}$ is a stabilizing control gain precomputed offline and $d_i$ is a decision variable. To guarantee recursive feasibility, we require that 
\begin{equation}
	\label{sec2_ter}
	x_i(T) \in \pazocal{X}_{f_i} = \{x_i \in \mathbb{R}^{n_i} : (x_i-c_i)^\top P_i (x_i-c_i) \leq \alpha_i^2\}
\end{equation}
where $\pazocal{X}_{f_i}$ is positively invariant under the stabilizing terminal controller $\kappa_i(x_{N_i}) = K_i x_{N_i} + d_i$. 

The matrix $P_i$ of the local terminal cost as well as the terminal control gain matrix $K_i$ can be computed offline by solving an SDP. The matrix $P_i$ is designed in such a way that $V \colon x \mapsto x^\top P x$  with $P=\operatorname{diag}(P_1,...,P_M)$ can be used as a Lyapunov function for the plant controlled by MPC. Following \cite{conte2012distributed,conte2016distributed}, we solve offline the SDP
\begin{equation}
\label{sec2_off}
\begin{aligned}
\max_{ E_i, Y_i, H_i, S_i} \sum_{i=1}^{M} \operatorname{trace} (E_i)
\ \text{s.t.} \
\left\{
\begin{aligned}
    & \forall i \in \{1,...,M\} \\
	& E_i\geq\epsilon_i, \ \eqref{sec2_LMI}, \ H_i \leq S_i, \\
	& \sum_{j \in \pazocal{N}_i} W_{ij} S_j W_{ij}^\top\leq 0,
\end{aligned}
\right.
\end{aligned}
\end{equation}
\noindent 
where $\operatorname{trace}(\cdot)$ is the trace operator of a matrix, $\epsilon_i$ are arbitrarily small positive constants, $E_i=P_i^{-1}$, $Y_{i}=K_iE_{N_i}$, $H_{i}$ and $S_{i}$ are decision variables, $E_{N_i}=\sum_{j \in \pazocal{N}_i} W_{ij}^\top E_j W_{ij}$, $S_{i}$ is assumed to be block diagonal and
\begin{equation}
	\label{sec2_LMI}
	\begin{bmatrix}
	W_i U_i^\top E_i U_i W_i^\top + H_{i} & * & * & * \\
	A_i E_{\pazocal{N}_i} + B_i Y_{i} & E_i & * & * \\
	Q_i^{1/2} E_{i} & 0 & I_{N_i} & * \\
	R_i^{1/2}  Y_{i} & 0 & 0 & I_{m_i}
	\end{bmatrix} \geq 0.
\end{equation}


Denoting the current state of the $i^{\text{th}}$ subsystem by $x_i^\text{init}$, we aim to solve in a distributed way the online OCP
\begin{equation}
\label{sec2_ocp}
\begin{aligned}
 \min_{ \left\{ \begin{aligned}  x_i(t),u_i&(t), d_i \\  x_{e_i},u_{e_i}, &\alpha_i, c_i \end{aligned} \right\}} \ \sum_{i=1}^M J_i
\quad \text{s.t.} \left\{		
\begin{aligned}
& x_i(0)=x_i^\text{init} \\
& \eqref{sec2_dyn},\eqref{sec2_cons}, \eqref{sec2_eq},\eqref{sec2_ter}, \\
& \forall t \in \{0,...,T-1\}, \\
& \forall i \in \{1,...,M\}.
\end{aligned}
\right.
\end{aligned}
\end{equation}
In addition to the decision variables $x_i(t)$ and $u_i(t)$ found in standard distributed MPC \cite{conte2012distributed,conte2016distributed} and the decision variables $x_{e_i}$ and $u_{e_i}$ used in tracking MPC \cite{ferramosca2013cooperative}, we also treat $\alpha_i$ and $c_i$ (parameterizing the terminal set) and $d_i$ (the affine term in the terminal controller) as decision variables to be determined online. Unlike \cite{aboudonia2021distributed}, $K_i$ (the terminal controller gain) is no longer a decision variable.

\section{Distributed MPC Scheme}\label{sec:mpc}

Although the variable $d_i$ does not affect the stability of the terminal dynamics, additional constraints still need to be imposed in \eqref{sec2_ocp} to ensure the positive invariance of the terminal set $\pazocal{X}_{f_i}$, which is parametrized by $\alpha_i$ and $c_i$, under the terminal controller $\kappa_i(x_{N_i})=K_i x_{N_i} + d_i$. We use the following proposition to derive these additional constraints.
\begin{prop}[\hspace{-0.25pt}\cite{darivianakis2019distributed}]
	\label{sec3_prop}
	Each local terminal set $\pazocal{X}_{f_i}$ is positively invariant under the action of the distributed controller $\kappa_i(x_{N_i})$ if for each subsystem $i \in \{1,...,M\}$ and for all $ x_{N_i} \in \Cross_{j \in \pazocal{N}_i} \pazocal{X}_{f_j}$,
	\begin{subequations}
		\label{sec3_prop0}
		\begin{gather}
		\label{sec3_prop1}
		A_ix_{N_i}+B_i\kappa_i(x_{N_i}) \in \pazocal{X}_{f_i}, \\
		\label{sec3_prop2}
		x_{N_i} \in \pazocal{X}_{\pazocal{N}_i}, \\
		\label{sec3_prop3}
		\kappa_i(x_{N_i}) \in \pazocal{U}_{i}.
		\end{gather}
	\end{subequations} 
\end{prop}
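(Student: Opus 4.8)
The plan is to verify that the three conditions in \eqref{sec3_prop0} are precisely the set-inclusion statements needed for the standard distributed invariance argument of \cite{darivianakis2019distributed} to go through in our setting, where the terminal set $\pazocal{X}_{f_i}$ is no longer centered at the equilibrium but at a free center $c_i$. Concretely, suppose the trajectory has reached a terminal state $x(T)$ with $x_i(T) \in \pazocal{X}_{f_i}$ for every $i$. Then for each $i$, the concatenated neighbor vector $x_{N_i}(T)$ has each block $x_j(T) = W_{ij} x_{N_i}(T)$ lying in $\pazocal{X}_{f_j}$, i.e. $x_{N_i}(T) \in \Cross_{j \in \pazocal{N}_i} \pazocal{X}_{f_j}$. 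Applying the terminal controller $\kappa_i(x_{N_i}(T)) = K_i x_{N_i}(T) + d_i$ and the dynamics \eqref{sec2_dyn}, the successor local state is $x_i(T+1) = A_i x_{N_i}(T) + B_i \kappa_i(x_{N_i}(T))$, which by \eqref{sec3_prop1} lies in $\pazocal{X}_{f_i}$; meanwhile \eqref{sec3_prop2} and \eqref{sec3_prop3} certify that the state and input constraints \eqref{sec2_cons} are respected at the terminal step. Since this holds simultaneously for all $i$, the whole state $x(T+1)$ is again in $\Cross_{i} \pazocal{X}_{f_i}$, and recursive application gives positive invariance of the product terminal set under the distributed controller $\kappa = (\kappa_1, \ldots, \kappa_M)$.

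The argument I would write therefore has three short steps. First, I would make explicit the bookkeeping that reduces ``$x(T) \in \Cross_i \pazocal{X}_{f_i}$'' to the per-subsystem hypothesis ``$x_{N_i} \in \Cross_{j \in \pazocal{N}_i} \pazocal{X}_{f_j}$'' via the extraction matrices $W_{ij}$ and the convention $i \in \pazocal{N}_i$; this is where the neighbor structure enters, and it is purely definitional. Second, I would invoke \eqref{sec3_prop1} to propagate each local state forward one step into $\pazocal{X}_{f_i}$, so that the product set is mapped into itself. Third, I would note that \eqref{sec3_prop2}--\eqref{sec3_prop3} guarantee feasibility at the terminal step, which is the second ingredient (besides invariance) needed for the terminal set to serve its purpose in the MPC recursive-feasibility proof. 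Then I would close by an induction on the number of steps beyond the horizon.

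The one point that deserves care — and which I expect to be the main obstacle, modest as it is — is that the hypotheses of the proposition quantify over \emph{all} $x_{N_i}$ in the product $\Cross_{j \in \pazocal{N}_i} \pazocal{X}_{f_j}$, whereas along a closed-loop trajectory the neighbor vector $x_{N_i}$ is constrained by the global dynamics and is not an arbitrary element of that product. The resolution is that positive invariance is a worst-case (``for all admissible terminal states'') property, so it is exactly the universally quantified statement \eqref{sec3_prop0} that we need, and the trajectory-feasible $x_{N_i}$ is a fortiori covered. A related subtlety is consistency of the shared variables across neighbors: $d_j$, $c_j$, $\alpha_j$ appearing in $\pazocal{X}_{f_j}$ must be agreed upon by subsystem $i$ when it checks its own conditions; in the distributed OCP \eqref{sec2_ocp} these are coupled through the neighbor exchange, so the check is well posed. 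Since Proposition~\ref{sec3_prop} is quoted verbatim from \cite{darivianakis2019distributed}, I would keep the proof to a brief paragraph recalling that reference's reasoning rather than re-deriving it, emphasizing only that nothing in that argument used the center of $\pazocal{X}_{f_i}$ being the equilibrium, which is why it transfers unchanged to the present reconfigurable-center setting.
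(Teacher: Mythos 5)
Your argument is correct and is essentially the one the paper relies on: the paper offers no proof of its own for this proposition, simply importing it from \cite{darivianakis2019distributed}, and your expansion (reduce the global hypothesis to the per-subsystem hypothesis via the $W_{ij}$ extraction, use \eqref{sec3_prop1} to map the product of local terminal sets into itself, use \eqref{sec3_prop2}--\eqref{sec3_prop3} for constraint satisfaction, then induct) is exactly the standard reasoning behind that cited result. Your closing observation—that nothing in the argument uses the terminal set being centered at the equilibrium, so it transfers to the reconfigurable-center, affine-controller setting—is the right point to emphasize and matches the paper's implicit use of the proposition.
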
 

In the sequel, we derive the constraints corresponding to conditions \eqref{sec3_prop1}, \eqref{sec3_prop2} and \eqref{sec3_prop3} in Propositions \ref{sec3_propA}, \ref{sec3_propB} and \ref{sec3_propC}, respectively. For this purpose, we define $\alpha_{N_i}= \sum_{j \in \pazocal{N}_i} \alpha_j W_{ij}^\top W_{ij}$ and $c_{N_i}=\sum_{j \in \pazocal{N}_i} W_{ij}^\top c_j$. 

\begin{prop}
	\label{sec3_propA}
	Condition \eqref{sec3_prop1} holds for all $j \in \pazocal{N}_i$, $x_j \in \pazocal{X}_{f_j}$ if there exist scalars $\lambda_{ij}\geq 0$ such that the linear matrix inequality \eqref{sec3_LMI1} holds where $P_{ij} = W_{ij}^\top P_j W_{ij}^\top$.
	\begin{equation}
	\label{sec3_LMI1}
	\begin{bmatrix}
	P_i^{-1} \alpha_i & (A_i +B_i K_i) \alpha_{{N}_i} &	(A_i+B_iK_i) c_{{N}_i} + B_i d_i - c_i \\
	* & \sum_{j \in \pazocal{N}_i} \lambda_{ij} P_{ij} & 0 \\
	* & * & \alpha_i - \sum_{j \in \pazocal{N}_i} \lambda_{ij}
	\end{bmatrix}
	\geq 0
\end{equation}
\end{prop}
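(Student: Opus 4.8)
The plan is to translate the set-containment condition \eqref{sec3_prop1} into an algebraic inequality and then apply the S-procedure to obtain the LMI \eqref{sec3_LMI1}. First I would observe that $x_j \in \pazocal{X}_{f_j}$ means $(x_j - c_j)^\top P_j (x_j - c_j) \leq \alpha_j^2$, and that, using $x_j = W_{ij} x_{N_i}$ and the definitions $\alpha_{N_i} = \sum_{j \in \pazocal{N}_i} \alpha_j W_{ij}^\top W_{ij}$, $c_{N_i} = \sum_{j \in \pazocal{N}_i} W_{ij}^\top c_j$, the joint constraint $x_{N_i} \in \Cross_{j \in \pazocal{N}_i} \pazocal{X}_{f_j}$ can be written as a family of quadratic inequalities in $x_{N_i}$, one per neighbor $j$. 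Similarly, the target condition $A_i x_{N_i} + B_i \kappa_i(x_{N_i}) \in \pazocal{X}_{f_i}$ is the single quadratic inequality $\bigl((A_i + B_i K_i)x_{N_i} + B_i d_i - c_i\bigr)^\top P_i \bigl((A_i + B_i K_i)x_{N_i} + B_i d_i - c_i\bigr) \leq \alpha_i^2$ (using $\kappa_i(x_{N_i}) = K_i x_{N_i} + d_i$).

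Next I would homogenize: introduce the lifted variable $z = [x_{N_i}^\top \ 1]^\top$ and write each neighbor constraint and the target constraint as $z^\top \Theta_j z \geq 0$ and $z^\top \Theta_0 z \geq 0$ respectively, where the $\Theta$'s are symmetric matrices built from $P_j$, $c_j$, $\alpha_j$ and from $A_i + B_i K_i$, $B_i d_i - c_i$, $\alpha_i$. By the (lossless, for one constraint, but here sufficient) S-procedure, the implication "$z^\top \Theta_j z \geq 0$ for all $j \in \pazocal{N}_i$ $\Rightarrow$ $z^\top \Theta_0 z \geq 0$" holds whenever there exist multipliers $\lambda_{ij} \geq 0$ with $\Theta_0 - \sum_{j} \lambda_{ij} \Theta_j \succeq 0$. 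The remaining work is to verify that this matrix inequality, after a Schur complement applied to the block coming from the quadratic form $\|(A_i+B_iK_i)x_{N_i} + \cdots\|_{P_i}^2 \leq \alpha_i^2$, rearranges exactly into \eqref{sec3_LMI1}: the $(1,1)$ block $P_i^{-1}\alpha_i$ is the Schur term $\alpha_i P_i^{-1}$ pulled out of $-\tfrac{1}{\alpha_i}(\cdot)^\top P_i (\cdot)$; the $(1,2)$ and $(1,3)$ blocks $(A_i + B_i K_i)\alpha_{N_i}$ and $(A_i+B_iK_i)c_{N_i} + B_i d_i - c_i$ are the linear coupling terms; and the $(2,2)$, $(3,3)$ blocks $\sum_j \lambda_{ij} P_{ij}$ and $\alpha_i - \sum_j \lambda_{ij}$ collect the S-procedure terms (noting $P_{ij} = W_{ij}^\top P_j W_{ij}^\top$ and that the constant parts $\alpha_j^2$ versus $\alpha_j$ are reconciled by how the $\alpha$'s enter the lifted matrices — this scaling bookkeeping is where I would be most careful).

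The main obstacle I anticipate is precisely this scaling/Schur-complement bookkeeping: making the quadratic forms homogeneous requires deciding whether to carry $\alpha_j^2$ or, after dividing through, $\alpha_j$, and the LMI as written mixes $\alpha_i$ (not $\alpha_i^2$) in the $(1,1)$ and $(3,3)$ entries, which signals that a particular normalization of the ellipsoids (or a congruence transformation by $\operatorname{diag}(I, I, \alpha_i^{-1})$ or similar, exploiting positivity of $\alpha_i$) has been applied before taking Schur complements. I would also need to check the direction of the S-procedure inequalities is consistent (ellipsoid membership gives $\leq$, so the quadratic forms $z^\top \Theta z \geq 0$ must be set up with the correct signs), and confirm that $\sum_j \lambda_{ij} P_{ij} \succeq 0$ automatically — it does, since each $P_j \succ 0$ and $\lambda_{ij} \geq 0$ — so feasibility of \eqref{sec3_LMI1} is not vacuously obstructed. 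Once the lifted matrices are written down correctly, the rest is a direct application of the S-procedure plus one Schur complement, so I would present the homogenization and the explicit $\Theta$ matrices carefully and then state that the claimed LMI follows by Schur complement.
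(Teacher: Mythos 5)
Your proposal follows essentially the same route as the paper's proof, which invokes the S-lemma (S-procedure) together with a Schur complement, exactly as you outline, and correctly uses only the sufficient direction of the multi-constraint S-procedure consistent with the ``if'' statement of the proposition. The scaling issue you flag (linearity in $\alpha_i$ rather than $\alpha_i^2$, handled by the appropriate normalization of the ellipsoids before the Schur complement) is precisely the bookkeeping carried out in the cited Proposition III.1 of \cite{aboudonia2021distributed}, to which the paper defers; the only structural remark the paper adds is that fixing $K_i$ offline is what makes \eqref{sec3_LMI1} linear in the remaining decision variables.
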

\begin{proof}
    The proof follows that of Proposition III.1 in \cite{aboudonia2021distributed} by making use of the S-lemma and the Schur complement. The only difference is that the resulting matrix inequality \eqref{sec3_LMI1} is linear since $K_i$ is no longer a decision variable.
\end{proof}



\begin{prop}
	\label{sec3_propB}
	Let the $k^{\text{th}}$ row of the matrix $G_i$ be denoted by $G_i^k$ and the $k^{\text{th}}$ element of the matrix $g_i$ by $g_i^k$. Condition \eqref{sec3_prop2} holds for all $j \in \pazocal{N}_i$, $x_j \in \pazocal{X}_{f_j}$ if and only if
	\begin{equation}
	    \label{sec3_LMI2}
	    G_i^k c_{N_i} + \sum_{j \in \pazocal{N}_i} \| G_i^k W_{ij}^\top P_j^{-1/2} \|_2 \alpha_j \leq g_i^k, \quad \forall k \in \{1,...,q_i\}.
	\end{equation}
\end{prop}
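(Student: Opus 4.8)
The plan is to read \eqref{sec3_LMI2} as the row-wise statement that the support function of the Cartesian product of ellipsoids $\Cross_{j\in\pazocal N_i}\pazocal X_{f_j}$, evaluated in the directions given by the rows $G_i^k$ of $G_i$, does not exceed the corresponding entries $g_i^k$ of $g_i$. The argument is a direct ellipsoidal support-function computation, so it should be short; the care needed is mostly bookkeeping with the selection matrices $W_{ij}$.

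First I would make the structure of the selection matrices explicit. Since $x_{N_i}$ is the concatenation of the local states of the subsystems $j\in\pazocal N_i$, each appearing exactly once, the matrices $W_{ij}$ have mutually disjoint row supports that jointly cover all coordinates, hence $\sum_{j\in\pazocal N_i} W_{ij}^\top W_{ij} = I_{n_{N_i}}$ and $x_{N_i}=\sum_{j\in\pazocal N_i} W_{ij}^\top x_j$ (consistent with the definitions $c_{N_i}=\sum_j W_{ij}^\top c_j$ and $\alpha_{N_i}=\sum_j \alpha_j W_{ij}^\top W_{ij}$). Therefore, as the tuple $(x_j)_{j\in\pazocal N_i}$ ranges over $\Cross_{j\in\pazocal N_i}\pazocal X_{f_j}$, the vector $x_{N_i}$ ranges over exactly $\{\sum_{j\in\pazocal N_i}W_{ij}^\top x_j : x_j\in\pazocal X_{f_j}\}$, and condition \eqref{sec3_prop2} is equivalent to $\max\{\,G_i^k x_{N_i} : x_j\in\pazocal X_{f_j}\ \forall j\in\pazocal N_i\,\}\le g_i^k$ for every $k\in\{1,\dots,q_i\}$.

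Next I would parametrize each ellipsoid. Writing $P_j^{1/2}\in\mathbb S_{++}^{n_j}$ for the symmetric positive definite square root and using $\alpha_j\ge 0$, every $x_j\in\pazocal X_{f_j}$ is of the form $x_j=c_j+\alpha_j P_j^{-1/2}v_j$ with $\|v_j\|_2\le 1$, and conversely every such point lies in $\pazocal X_{f_j}$. Substituting gives
\[
G_i^k x_{N_i} = G_i^k c_{N_i} + \sum_{j\in\pazocal N_i}\alpha_j\,\bigl(G_i^k W_{ij}^\top P_j^{-1/2}\bigr)v_j .
\]
Because the $v_j$ are independent and each constrained only by $\|v_j\|_2\le 1$, the maximum over the product set decouples over $j$, and by Cauchy--Schwarz each term attains its maximum value $\alpha_j\|G_i^k W_{ij}^\top P_j^{-1/2}\|_2$ at a suitable unit vector $v_j$. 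Hence $\max\{G_i^k x_{N_i}\}=G_i^k c_{N_i}+\sum_{j\in\pazocal N_i}\|G_i^k W_{ij}^\top P_j^{-1/2}\|_2\,\alpha_j$, and requiring this quantity to be $\le g_i^k$ for all $k$ is exactly \eqref{sec3_LMI2}.

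There is no substantive obstacle here. The only points deserving explicit mention are (i) the partition identity $\sum_{j}W_{ij}^\top W_{ij}=I_{n_{N_i}}$, which is what turns the quantifier over the Cartesian product into a single robust linear inequality per row; and (ii) the fact that $\alpha_j\ge 0$, which legitimizes the ellipsoid parametrization and pulling $\alpha_j$ out of the norm. The ``only if'' direction then requires nothing extra, since the worst-case point $x_j=c_j+\alpha_j P_j^{-1/2}v_j$ used above genuinely belongs to $\pazocal X_{f_j}$, so the supremum is attained and the two conditions are equivalent rather than merely one-sided.
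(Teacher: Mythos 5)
Your proposal is correct and follows essentially the same route as the paper's proof: parametrize each local ellipsoid as $x_j = c_j + P_j^{-1/2}s_j$ (equivalently $\alpha_j P_j^{-1/2}v_j$ with $\|v_j\|_2\le 1$), substitute into the row-wise inequality, and evaluate the decoupled worst case via Cauchy--Schwarz, which the paper handles by citing standard robust-constraint results. Your explicit remarks on the partition identity $\sum_j W_{ij}^\top W_{ij}=I$, the nonnegativity of $\alpha_j$, and attainment of the supremum are just slightly more detailed bookkeeping of the same argument.
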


\begin{proof}
    First, we define $s_j \in \mathbb{R}^{n_j}$ such that $x_j = c_j + P_j^{-1/2} s_j$ for all $j \in \pazocal{N}_i$ and hence, $x_{N_i} = c_{N_i} + P_{N_i}^{-1/2} s_{N_i}$ where $P_{N_i} = \sum_{j \in \pazocal{N}_i} W_{ij}^\top P_j W_{ij}$ and $s_{N_i} = \sum_{j \in \pazocal{N}_i} W_{ij}^\top s_j $. Thus, condition \eqref{sec3_prop2} can be written as 
    $G_i c_{N_i} + G_i P_{N_i}^{-1/2} s_{N_i} \leq g_i$ for all $j \in \pazocal{N}_i$, $s_j^\top s_j \leq  \alpha_j^2$. This is equivalent to satisfying $G_i^k c_{N_i} + G_i^k P_{N_i}^{-1/2} s_{N_i} \leq g_i^k$ for all $k \in \{1,...,q_i\}$, $j \in \pazocal{N}_i$, $s_j^\top s_j \leq  \alpha_j^2$. By making use of the definitions of $P_{N_i}$ and $s_{N_i}$, we reach that $G_i^k c_{N_i} + \sum_{j \in \pazocal{N}_i} G_i^k W_{ij}^\top P_j^{-1/2} s_{j} \leq g_i^k$ for all $k \in \{1,...,q_i\}$, $j \in \pazocal{N}_i$, $s_j^\top s_j \leq  \alpha_j^2$. Following \cite{banjac2020improving,boyd2004convex}, this robust constraint is satisfied if and only if \eqref{sec3_LMI2} holds.
\end{proof}

\begin{prop}
	\label{sec3_propC}
	Let the $k^{\text{th}}$ row of the matrix $H_i$ be denoted by $H_i^k$ and the $k^{\text{th}}$ element of the matrix $h_i$ by $h_i^k$. Condition \eqref{sec3_prop3} holds for all $j \in \pazocal{N}_i$, $x_j \in \pazocal{X}_{f_j}$ if and only if
	\begin{equation}
	    \label{sec3_LMI3}
	    \begin{aligned}
	        \forall k \in  \{1,&...,r_i\} , \\
	        & H_i^k K_i c_{N_i} + H_i^k d_i + \sum_{j \in \pazocal{N}_i} \| H_i^k K_i W_{ij}^\top P_j^{-1/2} \|_2 \alpha_j \leq h_i^k.
	    \end{aligned}
	\end{equation}
\end{prop}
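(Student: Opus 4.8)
The plan is to mirror, essentially verbatim, the argument used for Proposition~\ref{sec3_propB}, replacing the state map $x_{N_i}\mapsto G_i x_{N_i}$ by the input map $x_{N_i}\mapsto H_i\kappa_i(x_{N_i})$ and carrying along the additional affine term $d_i$. First I would introduce, for each $j\in\pazocal{N}_i$, the change of variables $x_j = c_j + P_j^{-1/2}s_j$, so that $x_j\in\pazocal{X}_{f_j}$ is equivalent to $s_j^\top s_j\le\alpha_j^2$. Stacking these into the neighbourhood vector and using $P_{N_i}=\sum_{j\in\pazocal{N}_i}W_{ij}^\top P_j W_{ij}$, $s_{N_i}=\sum_{j\in\pazocal{N}_i}W_{ij}^\top s_j$ together with the block-orthogonality $W_{ij}W_{ij'}^\top = \delta_{jj'}I$, one obtains $x_{N_i}=c_{N_i}+\sum_{j\in\pazocal{N}_i}W_{ij}^\top P_j^{-1/2}s_j$, exactly as in the proof of Proposition~\ref{sec3_propB}.

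Next I would substitute this expression into $\kappa_i(x_{N_i})=K_i x_{N_i}+d_i$, giving $\kappa_i(x_{N_i}) = K_i c_{N_i} + d_i + \sum_{j\in\pazocal{N}_i}K_i W_{ij}^\top P_j^{-1/2}s_j$, and then impose $H_i^k\kappa_i(x_{N_i})\le h_i^k$ row by row. Since $\pazocal{X}_{f_j}$ ranges independently over each $j$ (the variables $s_j$ are mutually unconstrained), condition~\eqref{sec3_prop3} holding for all $x_j\in\pazocal{X}_{f_j}$, $j\in\pazocal{N}_i$, is equivalent, for every $k\in\{1,\dots,r_i\}$, to the robust linear inequality
\[
H_i^k K_i c_{N_i} + H_i^k d_i + \sum_{j\in\pazocal{N}_i} H_i^k K_i W_{ij}^\top P_j^{-1/2} s_j \le h_i^k
\quad\text{for all } s_j^\top s_j\le\alpha_j^2,\ j\in\pazocal{N}_i .
\]

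Finally I would maximize the left-hand side over the $s_j$: the feasible set is a product of Euclidean balls of radii $\alpha_j$, so the maximization decouples across $j$ and, by the Cauchy--Schwarz inequality (attained at $s_j = \alpha_j\,(H_i^k K_i W_{ij}^\top P_j^{-1/2})^\top/\|H_i^k K_i W_{ij}^\top P_j^{-1/2}\|_2$ when this vector is nonzero, and trivially otherwise), the optimal value equals $\sum_{j\in\pazocal{N}_i}\|H_i^k K_i W_{ij}^\top P_j^{-1/2}\|_2\,\alpha_j$. Hence the robust inequality holds if and only if \eqref{sec3_LMI3} holds, which is the claim. I do not anticipate any real obstacle: this is precisely the robust-linear-constraint reformulation invoked via \cite{banjac2020improving,boyd2004convex} in Proposition~\ref{sec3_propB}. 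The only point that deserves a line of care is that $K_i$ is a fixed offline matrix, so $H_i^k K_i W_{ij}^\top P_j^{-1/2}$ is a constant vector and \eqref{sec3_LMI3} is an affine-plus-norm (second-order cone) constraint in the online decision variables $c_j$, $\alpha_j$ and $d_i$; the extra term $H_i^k d_i$ merely contributes to the constant offset of the robust linear inequality and does not affect the worst-case computation.
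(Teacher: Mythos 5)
Your proposal is correct and follows essentially the same route as the paper: the paper's proof simply applies the argument of Proposition~\ref{sec3_propB} with $G_i$ replaced by $H_iK_i$ and $g_i$ by $h_i-H_id_i$, which is exactly the substitution-and-worst-case (Cauchy--Schwarz over the product of balls) reformulation you carry out explicitly. Your version just spells out the details that the paper leaves to the reference to Proposition~\ref{sec3_propB} and \cite{banjac2020improving,boyd2004convex}.
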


\begin{proof}
    The proof follows that of Proposition \ref{sec3_propB} by replacing $G_i$ with $H_i K_i$ and $g_i$ with $h_i-H_id_i$.
\end{proof}

By considering the constraints \eqref{sec3_LMI1}, \eqref{sec3_LMI2} and \eqref{sec3_LMI3} for all $i \in \{1,...,M\}$ in the online OCP, we guarantee that the terminal set $X_{f_i}$ is positively invariant under the terminal controller $\kappa_i(x_{N_i})$ for all $i \in \{1,...,M\}$. Thus, the online OCP \eqref{sec2_ocp} then approximates to 
\begin{equation}
\label{sec3_ocp}
\begin{aligned}
 \min_{ \left\{ \begin{aligned}  x_i(t&), u_i(t) \\  x_{e_i},&u_{e_i},\lambda_{ij} \\ \alpha_i&, c_i, d_i  \end{aligned} \right\}} \ \sum_{i=1}^M J_i
\quad \text{s.t.} \left\{		
\begin{aligned}
& x_i(0)=x_i^\text{init} \\
& \eqref{sec2_dyn},\eqref{sec2_cons}, \eqref{sec2_eq},\eqref{sec2_ter} \\
& \eqref{sec3_LMI1}, \eqref{sec3_LMI2}, \eqref{sec3_LMI3}, \lambda_{ij} \geq 0, \\
& \forall t \in \{0,...,T\}, \\
& \forall i \in \{1,...,M\}, \ \forall j \in \pazocal{N}_i.
\end{aligned}
\right.
\end{aligned}
\end{equation}
To the decision variables of \eqref{sec2_ocp}, we have now added $\lambda_{ij}$ in the LMI \eqref{sec3_LMI1}. Note that the terminal set of the $i^{\text{th}}$ subsystem is not necessarily centered around the equilibrium point of the $i^{\text{th}}$ subsystem. (i.e. $c_i \neq x_{e_i}$). 

The following theorem establishes the recursive feasibility and the convergence of the closed-loop dynamics under the resulting controller to the piecewise constant reference $x_{r_i}$.
\begin{theorem}
	\label{th_stability}
	The MPC scheme \eqref{sec3_ocp} is recursively feasible and the corresponding closed-loop system converges to the piecewise constant reference $x_{r_i}$ given that this reference is admissible and changes finite number of times.
\end{theorem}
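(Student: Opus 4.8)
The plan is to use the classical tracking-MPC template: first establish recursive feasibility by constructing a time-shifted candidate solution for the successor problem, and then establish convergence by using the optimal value of \eqref{sec3_ocp} as a Lyapunov-like function, with the offline SDP \eqref{sec2_off}--\eqref{sec2_LMI} supplying the terminal-cost decrease and Proposition \ref{sec3_prop} together with Propositions \ref{sec3_propA}--\ref{sec3_propC} supplying feasibility of the appended terminal step.

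\emph{Recursive feasibility.} Let \eqref{sec3_ocp} be feasible at $x^{\text{init}}$ with optimizer $\bigl(x_i^\star(\cdot),u_i^\star(\cdot),x_{e_i}^\star,u_{e_i}^\star,\alpha_i^\star,c_i^\star,d_i^\star,\lambda_{ij}^\star\bigr)$. Applying $u_i^\star(0)$ drives the plant to the successor state $x_i^\star(1)$. For the successor problem I would take the candidate obtained by time-shifting, $x_i^+(k)=x_i^\star(k+1)$ and $u_i^+(k)=u_i^\star(k+1)$ for $k=0,\dots,T-2$, appending the terminal control and state $u_i^+(T-1)=\kappa_i(x_{N_i}^\star(T))$, $x_i^+(T)=A_ix_{N_i}^\star(T)+B_i\kappa_i(x_{N_i}^\star(T))$, and leaving the remaining quantities unchanged, $x_{e_i}^+=x_{e_i}^\star$, $u_{e_i}^+=u_{e_i}^\star$, $\alpha_i^+=\alpha_i^\star$, $c_i^+=c_i^\star$, $d_i^+=d_i^\star$, $\lambda_{ij}^+=\lambda_{ij}^\star$. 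Constraints \eqref{sec2_dyn} and \eqref{sec2_cons} for $k\le T-2$ are inherited from the optimizer at the previous step, \eqref{sec2_eq} and the LMIs \eqref{sec3_LMI1}--\eqref{sec3_LMI3} hold because the quantities entering them are unchanged, and since $x_j^\star(T)\in\pazocal{X}_{f_j}$ for every $j$ by \eqref{sec2_ter}, we have $x_{N_i}^\star(T)\in\Cross_{j\in\pazocal{N}_i}\pazocal{X}_{f_j}$; Propositions \ref{sec3_propA}, \ref{sec3_propB} and \ref{sec3_propC} then give $x_i^+(T)\in\pazocal{X}_{f_i}$, $x_{N_i}^\star(T)\in\pazocal{X}_{\pazocal{N}_i}$ and $\kappa_i(x_{N_i}^\star(T))\in\pazocal{U}_i$, which are exactly the constraints that remain to be checked at $k=T-1$ and $k=T$. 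Hence the candidate is feasible for the successor problem. Since none of the constraints of \eqref{sec3_ocp} involve $x_{r_i}$, feasibility also survives each of the finitely many reference changes, so the scheme is recursively feasible.

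\emph{Convergence.} After the last reference change the reference equals a fixed admissible value $x_{r_i}$, and I restrict to this regime. Denote by $V^\star(\cdot)$ the optimal value of \eqref{sec3_ocp}. Evaluating the cost at the candidate above, the $\|x_{e_i}-x_{r_i}\|^2_{S_i}$ terms cancel (the artificial equilibrium is unchanged) and the stage costs telescope, so that together with the terminal-cost decrease
\begin{multline*}
V^\star(x^{\text{init}}(t+1)) - V^\star(x^{\text{init}}(t)) \\ \le -\sum_{i=1}^{M}\Bigl(\|x_{N_i}^\star(0)-x_{e_{N_i}}^\star\|^2_{Q_i} + \|u_i^\star(0)-u_{e_i}^\star\|^2_{R_i}\Bigr) \le 0,
\end{multline*}
where the terminal-cost decrease holds because, in the deviation coordinates $x_i-x_{e_i}$, relations \eqref{sec2_eq} turn the terminal dynamics into $\tilde x_i^+=(A_i+B_iK_i)\tilde x_{N_i}$, for which the Lyapunov inequality encoded in \eqref{sec2_off}--\eqref{sec2_LMI} holds globally and in particular on $\Cross_{j\in\pazocal{N}_i}\pazocal{X}_{f_j}$. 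Thus $t\mapsto V^\star(x^{\text{init}}(t))$ is nonincreasing and bounded below, hence convergent, its increments are summable, and therefore $x_{N_i}^\star(0)-x_{e_{N_i}}^\star\to0$ and $u_i^\star(0)-u_{e_i}^\star\to0$, i.e. the closed-loop state converges to the optimal artificial equilibrium. The last step is to show that this optimal equilibrium converges to $x_{r_i}$: following the standard tracking-MPC argument (e.g.\ \cite{ferramosca2013cooperative,aboudonia2021distributed}), if the limiting optimal equilibrium were not $x_{r_i}$, then, using convexity of the set of admissible equilibria of \eqref{sec3_ocp} and admissibility of $x_{r_i}$, one could perturb it toward $x_{r_i}$ to obtain a feasible point of \eqref{sec3_ocp} with strictly smaller cost, a contradiction; hence the limiting equilibrium equals $x_{r_i}$ and the closed loop converges to $x_{r_i}$.

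\emph{Main difficulty.} The shifted-candidate construction is routine; the delicate part is the final convergence step, namely transplanting the ``optimal artificial equilibrium tends to the reference'' argument to the present setting and checking that the perturbation toward $x_{r_i}$ can be carried out while remaining inside the now-linear constraints \eqref{sec3_LMI1}--\eqref{sec3_LMI3} with the fixed offline gain $K_i$ (including keeping $\alpha_i>0$, which is where admissibility is used). I would also verify that the terminal-cost decrease from \eqref{sec2_off}--\eqref{sec2_LMI} is genuinely available on the coupled terminal region, which it is because \eqref{sec2_LMI} is a matrix inequality independent of the online parameters $\alpha_i$ and $c_i$.
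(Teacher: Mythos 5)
Your proposal is correct and follows essentially the same route as the paper, which simply defers to Theorem IV.1 of \cite{aboudonia2021distributed}: a shifted candidate with the fixed terminal gain $K_i$ (and unchanged $\alpha_i$, $c_i$, $d_i$, $x_{e_i}$, $u_{e_i}$) for recursive feasibility, the optimal value as a Lyapunov-like function with the decrease supplied by \eqref{sec2_off}--\eqref{sec2_LMI}, and the standard artificial-equilibrium-to-reference perturbation argument for tracking. Your write-up in fact spells out the details (including why the invariance constraints \eqref{sec3_LMI1}--\eqref{sec3_LMI3} make the appended terminal step feasible) that the paper leaves to the cited reference.
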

\begin{proof}
    The proof follows that of Theorem IV.1 in \cite{aboudonia2021distributed} with $K_i$ assumed fixed and no longer a decision variable.
\end{proof}

Note that the feasible region of either the novel scheme \eqref{sec3_ocp} or the one developed in \cite{aboudonia2021distributed} does not coincide in that of the other. This is because the scheme \eqref{sec3_ocp} does not constrain the center of the terminal set to the equilibrium point defined by $x_{e_i}$, but uses a predefined terminal controller computed offline. On the contrary, the scheme in \cite{aboudonia2021distributed} computes the terminal controller online, but constrains the center of the terminal set to the equilibrium point. Both schemes, however, have larger feasible sets than other schemes porposed in the literature. We show in Theorem \ref{th_feasibility} that the feasible region of the developed scheme \eqref{sec3_ocp} contains that of the scheme proposed in \cite{aboudonia2020distributed} (see \eqref{app_ocp} in Appendix).

\begin{theorem}
	\label{th_feasibility} 
	For a given initial condition, the distributed MPC problem \eqref{sec3_ocp} is feasible if the distributed MPC problem in \eqref{app_ocp} is feasible.
\end{theorem}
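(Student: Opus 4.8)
The plan is to prove the inclusion of feasible regions by a direct construction: from an arbitrary feasible point of \eqref{app_ocp} I build a feasible point of \eqref{sec3_ocp}. The scheme \eqref{app_ocp} of \cite{aboudonia2020distributed} can be viewed as the special case of \eqref{sec3_ocp} in which the subsystems regulate to the origin, so that the equilibrium/tracking quantities and the terminal-set centre are fixed a priori ($x_{e_i}=0$, $u_{e_i}=0$, $d_i=0$, $c_i=0$) while the remaining degrees of freedom are the predicted trajectories $x_i(t)$, $u_i(t)$, the terminal-set scalings $\alpha_i$ and the S-procedure multipliers $\lambda_{ij}$; these are constrained exactly by $x_i(0)=x_i^{\text{init}}$, \eqref{sec2_dyn}, \eqref{sec2_cons}, the origin-centred instance of \eqref{sec2_ter}, the specialisations of \eqref{sec3_LMI1}, \eqref{sec3_LMI2}, \eqref{sec3_LMI3} to $c_i=0$ and $d_i=0$, and $\lambda_{ij}\ge 0$.

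First I would fix a feasible tuple $\{\bar x_i(t),\bar u_i(t),\bar\alpha_i,\bar\lambda_{ij}\}$ of \eqref{app_ocp} and propose as a candidate for \eqref{sec3_ocp} the assignment $x_i(t)\coloneqq\bar x_i(t)$, $u_i(t)\coloneqq\bar u_i(t)$, $\alpha_i\coloneqq\bar\alpha_i$, $\lambda_{ij}\coloneqq\bar\lambda_{ij}$, together with $x_{e_i}\coloneqq 0$, $u_{e_i}\coloneqq 0$, $d_i\coloneqq 0$ and $c_i\coloneqq 0$; note that this forces $x_{e_{N_i}}=0$ and $c_{N_i}=0$ while $\alpha_{N_i}$ is unchanged. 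Then I would verify the constraints of \eqref{sec3_ocp} group by group. The initial-state constraint, the dynamics \eqref{sec2_dyn}, the path constraints \eqref{sec2_cons} and $\lambda_{ij}\ge 0$ are inherited verbatim. The equilibrium relations \eqref{sec2_eq} collapse to $0=0$ because $x_{e_i}$, $x_{e_{N_i}}$, $u_{e_i}$ and $d_i$ all vanish. The terminal inclusion \eqref{sec2_ter} with $c_i=0$ is exactly the origin-centred terminal constraint of \eqref{app_ocp}, hence holds for $\bar x_i(T)$. Substituting $c_i=0$, $d_i=0$, $c_{N_i}=0$ into \eqref{sec3_LMI1} annihilates the $(1,3)$ block and leaves precisely the invariance LMI of \eqref{app_ocp} evaluated at $\bar\alpha_i$, $\bar\lambda_{ij}$; the $c_{N_i}$-term of \eqref{sec3_LMI2} and the $c_{N_i}$- and $d_i$-terms of \eqref{sec3_LMI3} likewise drop out, leaving exactly the state- and input-constraint-satisfaction conditions on the terminal set that appear in \eqref{app_ocp}. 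Thus the candidate satisfies every constraint of \eqref{sec3_ocp}, proving the claim.

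The main obstacle is purely one of alignment between the two formulations: I must ensure that the positive-invariance and constraint-satisfaction conditions actually written in \eqref{app_ocp} coincide, after the substitution $c_i=0$, $d_i=0$, with the corresponding specialisations of \eqref{sec3_LMI1}--\eqref{sec3_LMI3}, so that the same $\bar\alpha_i$ and $\bar\lambda_{ij}$ serve as certificates. Since both schemes use identical data ($A_i$, $B_i$, $P_i$, $K_i$, $G_i$, $H_i$) and the same ellipsoidal terminal-set geometry, this is just a matter of re-reading Propositions \ref{sec3_propA}--\ref{sec3_propC} at $c_i=0$, $d_i=0$, and a cosmetically different (e.g.\ rescaled) S-procedure certificate in \cite{aboudonia2020distributed} would be absorbed by a one-line reparametrisation of $\lambda_{ij}$. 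If instead \eqref{app_ocp} turns out to be a tracking scheme in which the terminal-set centre is pinned to the equilibrium, $c_i=x_{e_i}$, the argument is even shorter --- copy all variables of \eqref{app_ocp} and set $c_i\coloneqq x_{e_i}$, whereupon \eqref{sec2_eq} again makes the $(1,3)$ block of \eqref{sec3_LMI1} vanish and the remaining verification is identical. In all cases feasibility is the only concern, so the cost functions need not be compared.
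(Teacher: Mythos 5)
Your high-level strategy --- copy the feasible point of \eqref{app_ocp}, zero out the tracking variables, and reuse the invariance certificates --- is the same as the paper's, but your concrete construction fails because you guessed the structure of \eqref{app_ocp} wrong. In \eqref{app_ocp} the terminal-set centre $c_i$ is \emph{not} fixed to zero: it is a decision variable (alongside $\alpha_i,\rho_{ij},\sigma_{ij}^k,\tau_{ij}^l$), the terminal constraint \eqref{app_ter} is the ellipsoid $(x_i(T)-c_i)^\top P_i (x_i(T)-c_i)\le\alpha_i^2$ centred at that $c_i$, and the invariance LMI \eqref{app_LMI1} is written at that $c_i$ as well. Your candidate sets $c_i\coloneqq 0$ while keeping $\bar x_i(T)$ and $\bar\alpha_i$, so \eqref{sec2_ter} --- and likewise the transfer from \eqref{app_LMI1} to \eqref{sec3_LMI1} --- fails in general, since $\bar x_i(T)$ need not lie in the origin-centred ellipsoid. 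The repair is exactly what the paper does: carry over $c_i\coloneqq\bar c_i$ (legal, because \eqref{sec3_ocp} does not tie $c_i$ to $x_{e_i}$), set $x_{e_i}=0$, $u_{e_i}=0$, $d_i=0$ so that \eqref{sec2_eq} holds with the fixed $K_i$, and take $\lambda_{ij}\coloneqq\rho_{ij}$, which makes \eqref{sec3_LMI1} literally identical to \eqref{app_LMI1}. Note that neither of your hedged readings matches \eqref{app_ocp}; in particular the variant with $c_i=x_{e_i}$ is the scheme of \cite{aboudonia2021distributed}, not \cite{aboudonia2020distributed}.

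A second, smaller inaccuracy: you claim that after substituting $c_i=0$, $d_i=0$ the conditions \eqref{sec3_LMI2} and \eqref{sec3_LMI3} become ``exactly'' the terminal-set state- and input-constraint conditions appearing in \eqref{app_ocp}. They do not: \eqref{app_ocp} enforces those conditions through the S-procedure LMIs \eqref{app_LMI2}--\eqref{app_LMI3} with multipliers $\sigma_{ij}^k,\tau_{ij}^l$, which are only \emph{sufficient} for \eqref{sec3_prop2}--\eqref{sec3_prop3}, whereas \eqref{sec3_LMI2}--\eqref{sec3_LMI3} are equivalent to them (Propositions \ref{sec3_propB} and \ref{sec3_propC}); your fallback of ``reparametrising $\lambda_{ij}$'' cannot absorb this mismatch because the new conditions carry no multipliers at all. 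The correct step, as in the paper, is the implication chain \eqref{app_LMI2}--\eqref{app_LMI3} $\Rightarrow$ \eqref{sec3_prop2}--\eqref{sec3_prop3} $\Leftrightarrow$ \eqref{sec3_LMI2}--\eqref{sec3_LMI3} with $d_i=0$. With these two corrections (keep $\bar c_i$; argue implication rather than identity for the constraint-satisfaction conditions) your argument coincides with the paper's proof.
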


\begin{proof}
    For a given initial condition $x_i^\text{init}, \ i \in \{1,...,M\}$, assume that $x_i(t), \ u_i(t), \ \alpha_i, \ c_i, \ \rho_{ij}, \ \sigma_{ij}^k, \ \tau_{ij}^l$ for all $t \in \{0,...,T\}, \ i \in \{1,...,M\}, \ j \in \pazocal{N}_i, \ k \in \{1,...,n_{q_i}\}, \ l \in \{1,...,n_{r_i}\}$ is a feasible solution for the MPC problem \eqref{app_ocp}. This solution then satisfies the constraints in \eqref{app_ocp}. The goal is to prove that it also satisfies the constraints in \eqref{sec3_ocp}. Note that this solution satisfies \eqref{sec2_dyn}, \eqref{sec2_cons}, \eqref{sec2_ter} and ${x}_i(0)={x}_{i_0}$, for all $ t \in \{0,...,T\}$ and $ \ i \in \{1,...,M\}$ in (\ref{sec3_ocp}) since the same constraints occur in \eqref{app_ocp}.
	By comparing the cost functions of both schemes, we deduce that the reference point $x_{r_i}$ and the equilibrium point given by $x_{e_i}$ and $u_{e_i}$ are all equal to zero for all $i \in \{1,...,M\}$ and hence, \eqref{sec2_eq} is satisfied in \eqref{sec3_ocp} for $d_i=0$. Therefore, \eqref{sec3_LMI1} is satisfied since this constraint is the same as \eqref{app_LMI1} for $d_i=0$ and $\lambda_{ij}={\rho}_{ij}$.
	Finally, note that \eqref{sec3_LMI2} and \eqref{sec3_LMI3} are equivalent to \eqref{sec3_prop2} and \eqref{sec3_prop3}. On the other side, \eqref{app_LMI2} and \eqref{app_LMI3} are sufficient conditions for \eqref{sec3_prop2} and \eqref{sec3_prop3} assuming that $d_i=0$. Therefore, \eqref{app_LMI2} and \eqref{app_LMI3} imply \eqref{sec3_LMI2} and \eqref{sec3_LMI3} for $d_i=0$.
\end{proof}

The online OCP \eqref{sec3_ocp} of the proposed scheme is formulated as an SDP due to the LMI \eqref{sec3_LMI1}. To reduce the computational effort, the SDP can be approximated by an SOCP using diagonal dominance \cite{majumdar2020recent,ahmadi2019dsos}.
\begin{definition}[\hspace{-0.01cm}\cite{ahmadi2019dsos}]
    \label{sec3_def}
    A symmetric matrix $A$ is diagonally dominant if $a_{ii} \geq \sum_{j \neq i} |a_{ij}|$ for all $i$. 
\end{definition}
The following proposition can be used to approximate the LMI \eqref{sec3_LMI1} by a set of linear inequalities.
\begin{prop}[\hspace{-0.01cm}\cite{berman2003completely}]
    \label{sec3_prop_dd}
    A diagonally dominant symmetric matrix is positive semidefinite.
\end{prop}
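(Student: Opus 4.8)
The plan is to verify positive semidefiniteness directly from the quadratic form, exploiting the fact that diagonal dominance in the sense of Definition~\ref{sec3_def} forces every diagonal entry to be nonnegative (indeed $a_{ii} \geq \sum_{j\neq i}|a_{ij}| \geq 0$). Let $A=(a_{ij})$ be a symmetric diagonally dominant matrix and fix an arbitrary vector $x$. First I would expand
\[
x^\top A x = \sum_{i} a_{ii} x_i^2 + \sum_{i \neq j} a_{ij} x_i x_j .
\]
Next, I would apply the elementary inequality $|x_i x_j| \leq \tfrac12 (x_i^2 + x_j^2)$ together with the symmetry $a_{ij}=a_{ji}$ to lower-bound the off-diagonal part by $-\sum_{i\neq j} |a_{ij}| \tfrac12 (x_i^2 + x_j^2) = -\sum_{i} \big(\sum_{j\neq i} |a_{ij}|\big) x_i^2$. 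Substituting this back yields $x^\top A x \geq \sum_{i} \big( a_{ii} - \sum_{j\neq i}|a_{ij}| \big) x_i^2$, and each coefficient is nonnegative by Definition~\ref{sec3_def}. Hence $x^\top A x \geq 0$, and since $x$ was arbitrary, $A$ is positive semidefinite.

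As an alternative route, which I would note in passing, one can invoke the Gershgorin circle theorem: the eigenvalues of the symmetric (hence real-spectrum) matrix $A$ lie in $\bigcup_i [\,a_{ii} - R_i,\ a_{ii} + R_i\,]$ with $R_i = \sum_{j\neq i}|a_{ij}|$, and diagonal dominance gives $a_{ii} - R_i \geq 0$, so the whole spectrum is nonnegative.

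There is essentially no hard step here; the only points needing a little care are the bookkeeping in the double sum — checking that $\tfrac12(x_i^2+x_j^2)$ summed over all ordered pairs $(i,j)$ with $i\neq j$ reassembles exactly into $\sum_i R_i x_i^2$ — and the remark that Definition~\ref{sec3_def} implicitly guarantees $a_{ii}\geq 0$. Since this is a classical fact (\cite{berman2003completely}), the proof stays a few lines long, and it is precisely this closed-form bound that lets the diagonal-dominance relaxation replace the LMI \eqref{sec3_LMI1} by a set of linear inequalities.
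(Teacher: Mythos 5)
Your argument is correct. The paper itself offers no proof of this proposition---it is imported verbatim from the cited reference \cite{berman2003completely}---so there is no ``paper proof'' to diverge from; your quadratic-form bound is the standard elementary argument and it goes through exactly as you sketch it. The bookkeeping step is fine: using symmetry of $|a_{ij}|$, the double sum $\tfrac12\sum_{i\neq j}|a_{ij}|(x_i^2+x_j^2)$ splits into two copies of $\tfrac12\sum_i\bigl(\sum_{j\neq i}|a_{ij}|\bigr)x_i^2$ (row sums equal column sums for a symmetric matrix), so the lower bound $x^\top A x \geq \sum_i\bigl(a_{ii}-\sum_{j\neq i}|a_{ij}|\bigr)x_i^2 \geq 0$ follows from Definition~\ref{sec3_def}, and your remark that the definition already forces $a_{ii}\geq 0$ is the right observation to make explicit. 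The Gershgorin route you mention in passing is an equally valid one-line alternative, since a real symmetric matrix has real spectrum and diagonal dominance pushes every Gershgorin interval into $[0,\infty)$. Either version would serve as a self-contained justification in place of the bare citation.
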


Based on Definition \ref{sec3_def} and Proposition \ref{sec3_prop}, LMI \eqref{sec3_LMI1} can be approximated using the linear inequalities \eqref{sec3_dd} (found overleaf in single column) where $\{\cdot\}_{jk}$ is the element in the $j^{\text{th}}$ row and $k^{\text{th}}$ column of a matrix and $|\cdot|$ is a matrix with the absolute values of the elements in the original matrix.

    \begin{table*}
    \normalsize
    \begin{subequations}
    \label{sec3_dd}
    \begin{gather}
        \left\{ P_i^{-1} \alpha_i \right\}_{kk} 
        \geq 
        \sum\nolimits_{l=1}^{n_i} \left\{ |P_i^{-1}| \alpha_i \right\}_{kl}
        -
        \left\{ |P_i^{-1}| \alpha_i \right\}_{kk}
        +
        \sum\nolimits_{l=1}^{n_{N_i}} \left\{ |A_i + B_i K_i| \alpha_{N_i} \right\}_{kl}
        +
        \left\{ b_i \right\}_{k},
        \ \forall k \in \{1,...,n_i\}
    \\
    \vspace{0.25cm}
    \begin{split}
        \left\{ \sum\nolimits_{j \in \pazocal{N}_i} \lambda_{ij} P_{ij} \right\}_{kk} 
        \geq 
        \sum\nolimits_{l=1}^{n_{N_i}} \left\{ \sum\nolimits_{j \in \pazocal{N}_i} \lambda_{ij} |P_{ij}| \right\}_{kl}
        -
        \left\{ \sum\nolimits_{j \in \pazocal{N}_i} \lambda_{ij} |P_{ij}| \right\}_{kk}
        +
        \sum\nolimits_{l=1}^{n_i} \left\{ \left(|A_i + B_i K_i| \alpha_i\right)^\top \right\}_{kl}, \\
        \ \forall k \in \{1,...,n_{N_i}\}
    \end{split}
    \\
        \alpha_i - \sum\nolimits_{j \in \pazocal{N}_i} \lambda_{ij}
        \geq 
        \sum\nolimits_{l=1}^{n_i} \left\{ b_i \right\}_{l},
    \quad \quad \quad
    -b_i \leq (A_i+B_iK_i) c_{N_i} + B_i d_i - c_i \leq b_i
    \end{gather}
    \rule{\textwidth}{0.4pt}
    \end{subequations}
    \vspace{-0.6cm}
    \end{table*}

In this case, the online OCP \eqref{sec3_ocp} is approximated by    
\begin{equation}
\label{sec3_ocpdd}
\begin{aligned}
 \min_{ \left\{ \begin{aligned}  x_i(t&), u_i(t) \\  x_{e_i},&u_{e_i},\lambda_{ij} \\ \alpha_i, &c_i, d_i, b_i  \end{aligned} \right\}} \ \sum_{i=1}^M J_i
\quad \text{s.t.} \left\{		
\begin{aligned}
& x_i(0)=x_i^\text{init}, \\
& \eqref{sec2_dyn},\eqref{sec2_cons}, \eqref{sec2_eq},\eqref{sec2_ter}, \\
& \eqref{sec3_dd}, \eqref{sec3_LMI2}, \eqref{sec3_LMI3}, \lambda_{ij} \geq 0, \\
& \forall t \in \{0,...,T\}, \\
& \forall i \in \{1,...,M\}, \ \forall j \in \pazocal{N}_i.
\end{aligned}
\right.
\end{aligned}
\end{equation}
It is easy to verify that Theorem \ref{th_stability} still holds for \eqref{sec3_ocpdd} which can be cast as an SOCP and not a QP due to constraint \eqref{sec2_ter}.

\section{Distributed Implementation}\label{sec:implementation}

\label{ADMM}

The online OCPs \eqref{sec3_ocp} and \eqref{sec3_ocpdd} can be solved using one of the many distributed optimization techniques proposed in the literature \cite{nedic2018distributed}. Some of these techniques do not require a central coordinator such as the distributed primal-dual algorithm (DPDA) \cite{aybat2019distributed} and some variants of ADMM \cite{boyd2011distributed}; the numerical comparison in \cite{banjac2019decentralized} suggests that ADMM outperforms DPDA for certain classes of problems. Motivated by this, we implement \eqref{sec3_ocp} and \eqref{sec3_ocpdd} using ADMM \cite[Section 7]{boyd2011distributed}, where each subsystem solves a local optimization problem iteratively while sharing information with its neighbours. Note that two neighbours $i$ and $j$ share the variables $x_i(t)$, $x_j(t)$, $x_{e_i}$, $x_{e_j}$, $\alpha_i$, $\alpha_j$, $c_i$ and $c_j$ for all $t \in \{1,...,T\}$. Thus, the shared variables of the $i^{\text{th}}$ subsystem are $w_{N_i}=(x_{N_i}(t)|_{t=\{0,...,T\}},x_{e_{N_i}},\operatorname{diag}(\alpha_{N_i}),c_{N_i})$, whereas its non-shared variables are $v_i=(u_i(t)|_{t=\{0,...,T-1\}},$ $u_{e_i},d_i,\lambda_{ij}|_{j \in \pazocal{N}_i})$; to simplify notation we also define $w_i=(x_i(t)|_{t=\{0,...,T\}},x_{e_i},\alpha_i,c_i)$. In the sequel, we show briefly how to solve \eqref{sec3_ocp} and \eqref{sec3_ocpdd} using ADMM; see \cite[Section 7]{boyd2011distributed} for more details.

First, we define for each subsystem the local augmented cost function $\tilde{J}_i$, which encodes both its local cost function $J_i$ and its constraints through indicator functions. Hence, the online OCP \eqref{sec3_ocp} and \eqref{sec3_ocpdd} are given by
\begin{equation}
    \label{sec3_admm1}
    \min_{w_{N_i},v_i} \sum\nolimits_{i=1}^{M} \tilde{J}_i(w_{N_i},v_i).
\end{equation}
We then define a local copy for each shared decision variable in the augmented cost function of the $i^{\text{th}}$ subsystem and denote it by $(\cdot)^{(i)}$. To obtain a feasible solution, the local copies of the same decision variable existing in different augmented cost functions should be equal. Hence, we define a global copy $z$ comprising all shared decision variables. We refer to the subvector of $z$ corresponding to $w_{N_i}$ as $z_{N_i}$ and to the subvector of $z$ corresponding to $w_i$ as $z_i$. For two neighbours $i$ and $j$, the vectors $z_{N_i}$ and $z_{N_j}$ overlap. The vectors $z_i$ (a subvector of $z_{N_i}$) and $z_j$ (a subvector of $z_{N_j}$) do not overlap however. Hence, the online OCP \eqref{sec3_admm1} becomes
\[
    \min_{w_{N_i}^{(i)},v_i} \sum\nolimits_{i=1}^{M} \tilde{J}_i(w_{N_i}^{(i)},v_i) \ s.t. \ w_{N_i}^{(i)} = z_{N_i} \quad \forall \ i \in \{1,...,M\},
\]
and its augmented Lagrangian is given by
\[
    L = \sum\nolimits_{i=1}^{M} L_i (w_{N_i}^{(i)},v_i,z_{N_i},y_{N_i}),
\]
where $y_{N_i}$ is the Lagrange multiplier computed by the $i^{\text{th}}$ subsystem and
\[
    L_i(w_{N_i}^{(i)},v_i,z_{N_i},y_{N_i}) = \tilde{J}_i(w_{N_i}^{(i)},v_i) + \tfrac{\rho}{2} \|w_{N_i}^{(i)}-z_{N_i}+\tfrac{1}{\rho}y_{N_i}\|_2^2.
\]






\begin{algorithm}[t]
	\caption{Distributed MPC for Reference Tracking}
	\label{alg}
	\begin{algorithmic}[1]
		\INPUT System matrices $A_i$, $B_i$, constraint sets $\pazocal{X}_{N_i}$, $\pazocal{U}_i$, cost function matrices $Q_i$, $R_i$, $S_i$ map $W_{ij}$, initial condition $x_{0_i}$, prediction horizon $T$, ADMM step size $\rho$, ADMM maximum time $T_{max}$
		\OUTPUT $x_{N_i}(t)$, $u_i(t)$, $\forall t \in \{0,...,T\}$, $\alpha_i$, $c_i$, $d_i$, $\lambda_{ij}$, $ \forall j \in \pazocal{N}_i$
		\STATE Solve \eqref{sec2_off} offline 
		\STATE Set $z_{N_i}[0]=0$ and $y_{N_i}[0]=0$
		\WHILE{true}
		\STATE Set $k=1$
		\STATE \textbf{do}
		\STATE $\ \ $ Set $z_{N_i}^\text{prev} = z_{N_i}[k-1]$
		and $y_{N_i}^\text{prev} = y_{N_i}[k-1]$
	    \STATE $\ \ $ $(w_{N_i}^{(i)}[k],v_i[k]) = \argmin\limits_{w_{N_i}^{(i)},v_i}
        L_i\left(w_{N_i}^{(i)},v_i,z_{N_i}^\text{prev}, y_{N_i}^\text{prev}\right)$
	    \STATE $\ \ $ Share $w_j^{(i)}$ and $w_i^{(j)}$ with $j \in \pazocal{N}_i$
	    \STATE $\ \ $ $z_i[k] = \frac{1}{|\pazocal{N}_i|} \sum_{j \in \pazocal{N}_i} w_i^{(j)}[k]$
	    \STATE $\ \ $ Share $z_i$ with all $j \in \pazocal{N}_i$
		\STATE $\ \ $ $y_{N_i}[k] = y_{N_i}[k-1] + \rho \left( w_{N_i}^{(i)}[k] - z_{N_i}^{(i)}[k] \right)$
		\STATE $\ \ $ Set $k=k+1$
		\STATE \textbf{until time limit $T_{max}$ is reached}
		\STATE Apply the first control input  $u_i(0)$ to the plant (\ref{sec2_dyn})
		\STATE Measure/estimate the new state $x_i^\text{init}$ 
		\ENDWHILE
	\end{algorithmic}
\end{algorithm}


By using the augmented Lagrangian, we can run the iterative ADMM algorithm where each subsystem performs three steps in each iteration. First, each subsystem solves a local optimization problem to update its local variables. Then, each subsystem updates a subvector of the global copy using local information only. Finally, each subsystem updates its Lagrange multipliers. At each timestep, the ADMM algorithm terminates after a predefined time determined by the sampling time available for computations. Algorithm \ref{alg} shows how to implement the proposed scheme using ADMM where $T[k]$ is the time required to perform one iteration.

\section{Simulations}\label{ref:simulations}

We evaluate the efficacy of the proposed schemes \eqref{sec3_ocp} and \eqref{sec3_ocpdd} and compare them to \cite{aboudonia2021distributed} in terms of performance and computational cost; the numerical results in \cite{aboudonia2021distributed} suggest that this scheme has a  larger feasible region and tends to show better performance compared to other distributed schemes in the literature. We denote \eqref{sec3_ocp} by DST, \eqref{sec3_ocpdd} by DST+DD and the scheme developed in \cite{aboudonia2021distributed} by RTI. We solve all optimization problems using MATLAB with YALMIP \cite{Lofberg2004yalmip} and MOSEK \cite{mosek} on a computer equipped with 16-GB RAM and a 1.9-GHz Intel core i7-8550U processor.
\begin{figure}
	\centering
	\scalebox{1}{		\begin{tikzpicture}[thick,scale=0.8, every node/.style={scale=0.8}]
			\node[draw,thick,rectangle,rounded corners=0.25cm,minimum size=.8cm] (pga1) {PGA 1};
			\node[draw,thick,rectangle,rounded corners=0.25cm,minimum size=.8cm, right = 0.8cm of pga1] (pga2) {PGA 2};
			\node[draw,thick,rectangle,rounded corners=0.25cm,minimum size=.8cm, right = 0.8cm of pga1, below = 0.6cm of pga2] (pga3) {PGA 3};
			\node[draw,thick,rectangle,rounded corners=0.25cm,minimum size=.8cm, right = 0.8cm of pga2] (pga5) {PGA 5};
			\node[draw,thick,rectangle,rounded corners=0.25cm,minimum size=.8cm, right = 0.8cm of pga2, below = 0.6cm of pga5] (pga4) {PGA 4};
			\node[draw,thick,rectangle,rounded corners=0.25cm,minimum size=.8cm, right = 0.5cm of pga5] (pga6) {PGA 6};
			\node[draw,thick,rectangle,rounded corners=0.25cm,minimum size=.8cm, right = 0.8cm of pga5, below = 0.6cm of pga6] (pga7) {PGA 7};
			\draw[Stealth-Stealth,thick] (pga1) -- (pga2);
			\draw[Stealth-Stealth,thick] (pga2) -- (pga3);
			\draw[Stealth-Stealth,thick] (pga2) -- (pga5);
			\draw[Stealth-Stealth,thick] (pga3) -- (pga4);
			\draw[Stealth-Stealth,thick] (pga4) -- (pga5);
			\draw[Stealth-Stealth,thick] (pga5) -- (pga6);
			\draw[Stealth-Stealth,thick] (pga5) -- (pga7);
		\end{tikzpicture}}
	\caption{Power network topology}
	\label{Topology}
	\vspace{-0.4cm}
\end{figure}
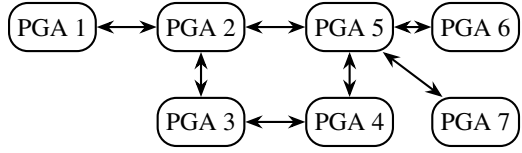
\begin{table}
    \centering
    \caption{Power Network Parameters}
    \renewcommand{\arraystretch}{1.3}
    \begin{tabular}{crlllll}
        \toprule
        PGA & $H_i$ & $D_i$ & $R_{t_i}$ & $T_{t_i}$ & $T_{g_i}$ & $p_i^{max}$ \\
        \midrule
        1 & 12 &   0.05 & 0.7 & 0.65 & 0.1 &  0.5 \\
        2 & 10 & 0.0625 & 0.9 &  0.4 & 0.1 & 0.65 \\
        3 &  8 &    0.8 & 0.9 &  0.3 & 0.1 & 0.65 \\
        4 &  8 &    0.8 & 0.7 &  0.6 & 0.1 & 0.55 \\
        5 &  8 &    0.8 & 0.9 &  0.3 & 0.1 & 0.65 \\
        6 & 10 & 0.0625 & 0.9 &  0.4 & 0.1 & 0.65 \\
        7 & 12 &   0.05 & 0.7 & 0.65 & 0.1 &  0.5 \\
        \bottomrule
    \end{tabular}
    \label{tab:my_label}
    \vspace{-0.6cm}
\end{table}

We use a power network case study \cite{riverso2013plug} comprising a set of power generation areas (PGAs), each of which represents one subsystem in the network (Fig.\ref{Topology}). 
The state and input vectors of the $i^{\text{th}}$ PGA are $x_i=[\Delta \theta_i, \ \Delta \omega_i, \ \Delta P_{M_i} - \Delta P_{L_i}, \ \Delta P_{V_i} - \Delta P_{L_i}]^\top$ and $u_i = \Delta P_{R_i} - \Delta P_{L_i}$, where $\Delta \theta_i$ represents the angular displacement deviation, $\Delta \omega_i$ the angular velocity deviation, $\Delta P_{M_i}$ the mechanical power deviation, $\Delta P_{V_i}$ the steam valve position deviation, $\Delta P_{L_i}$ the load change deviation and $\Delta P_{R_i}$ the reference set power deviation.
Following \cite{DBLP:journals/corr/abs-1302-0226}, the dynamics of the $i^{\text{th}}$ PGA is given by
$\dot{x}_i=\sum_{j \in \pazocal{N}_i} A_{ij}x_j + B_iu_i$ where $A_{ij}$ with $i=j$ and $B_i$ are 
\begin{equation}
    A_{ii} = 
    \begin{bmatrix}
        0 & 1 & 0 & 0 \\
        -\sum_{j \in \pazocal{N}_i} \frac{P_{ij}}{2H_i} & \frac{-D_i}{2H_i} & \frac{1}{2H_i} & 0 \\
        0 & 0 & \frac{-1}{T_{t_i}} & \frac{1}{T_{t_i}} \\
        0 & \frac{-1}{R_{t_i} T_{g_i}} & 0 & \frac{-1}{T_{g_i}} \\
    \end{bmatrix},
    \ 
    B_i = 
    \begin{bmatrix}
        0 \\ 0 \\ 0 \\ \frac{1}{T_{g_i}}. \\    
    \end{bmatrix}
\end{equation}
The entries of the matrix $A_{ij}$ with $i \neq j \in \pazocal{N}_i$ are all zeros except for the one in the second row and first column which equals $\frac{P_{ij}}{2H_i}$. The PGA parameters $H_i$, $D_i$, $R_{t_i}$, $T_{t_i}$ and $T_{g_i}$ are listed in Table 1 for each PGA. The parameter $P_{ij}$ describes the coupling between the two neighbours $i$ and $j$ where $P_{ij}=P_{ji}$, $P_{12}=4$, $P_{23}=2$, $P_{25}=1$, $P_{34}=2$ $P_{45}=2$ $P_{56}=3$ $P_{56}=3$.
It is easy to verify that the continuous-time dynamics of each PGA has the same structure of the discrete-time dynamics \eqref{sec2_dyn}. To preserve this structure after discretization, we use the Frobenius-norm-based discretization method \cite{souza2015discretisation} with a sampling time of one second. Each PGA is subject to the constraints, $|\Delta \theta_i| \leq 0.1$ and $|\Delta P_{R_i}| \leq p_i^{max}$ where $p_i^{max}$ of all PGAs are listed in Table 1.
The weights of the cost function are given by $R_i=0.1$, $S_i=\operatorname{diag}(1000,1000,10,10)$, $W_{ij} Q_{N_i} W_{ij}^\top = 0.99S_c$ if $i=j$ and $0.01S_c$ if $i \neq j$. The matrix $P_i$ and the controller $K_i$ are computed based on \eqref{sec2_off}. The prediction horizon is given by $T=5$.

First, we compare the closed-loop cost $J_s$ of all schemes for 25 randomly-generated target points $x_{r_i}$ where $J_s= \sum_{i=1}^M \sum_{t=0}^{T_{sim}} \left( \| x_{N_i}^t(0)-x_{r_{N_i}} \|^2_{Q_i} + \| u_i^t(0)-u_{r_i} \|^2_{R_i} \right)$, the subscript $s$ refers to scheme $s \in \{\text{DST, \ DST+DD, \ RTI}\}$ and the superscript $t$ refers to the optimal solution at timestep $t$. We solve the OCP of all considered schemes recursively for $T_{sim}=10$ timesteps and centrally to compare the costs when solved to optimality. It is found that all schemes yield almost the same closed-loop cost (omitted in the interest of space). 
\begin{figure}
    \centering
	\includegraphics[scale=0.25]{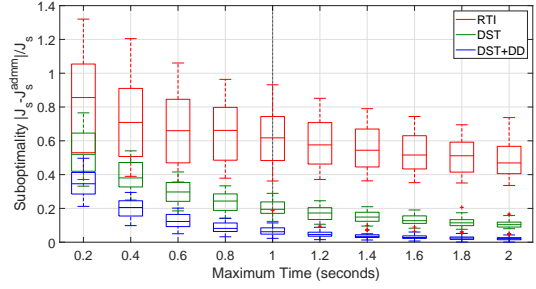}
	\caption{Median, interquartile range $(25\%-75\%)$, minimum, maximum and outliers of the suboptimality $|J_s^{admm}-J_s|/J_s$ vs the maximum computation time per timestep for the three schemes; the dotted black line refers to the used sampling time of 1 second.}
	\label{Fig1}
	\vspace{-0.6cm}
\end{figure}
\begin{figure}
    \centering
	\includegraphics[scale=0.25]{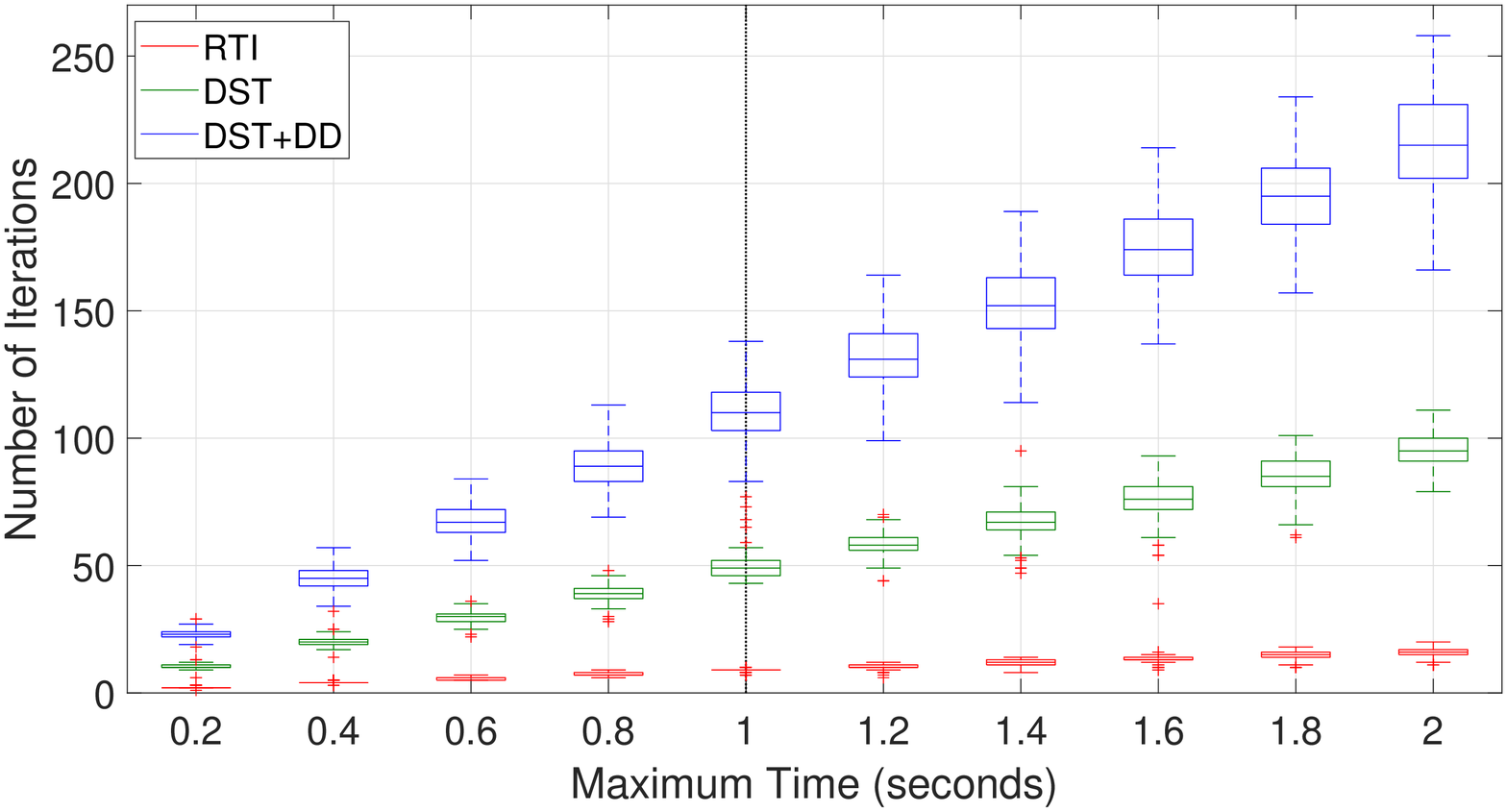}
	\caption{Median, interquartile range $(25\%-75\%)$, minimum, maximum and outliers of the number of ADMM iterations vs the maximum computation time per timestep for the three schemes; the dotted black line refers to the used sampling time of 1 second.}
	\label{Fig2}
	\vspace{-0.6cm}
\end{figure}

Next, we solve the OCP of all schemes in a distributed fashion as described in Section \ref{ADMM}. We denote the resulting closed-loop cost obtained by scheme $s$ by $J_s^\text{admm}$. Note that $J_s^\text{admm}$ converges to $J_s$ only asymptotically in the number of ADMM iterations. Hence, we choose a termination condition based on a pre-defined maximum time $T_\text{max}$ to imitate the amount of computation time available for each sampling time. Fig.\ref{Fig1} shows a boxplot for the suboptimality $|J_s-J_s^\text{admm}|/J_s$ of each scheme $s$ where $T_\text{max} = 0.2r$ seconds and $r \in \{1,...,10\}$ (Equivalently $20\%$ to $200\%$ of the sampling time of 1 second). Note that DST+DD has better convergence properties than DST which, in turn, outperforms RTI. This is mainly because ADMM can perform the highest number of iterations with DST+DD and the lowest number of iterations with DST. This is obvious in Fig.\ref{Fig2} which shows a boxplot for the number of ADMM iterations required within $T_\text{max}$ at each timestep for each initial condition. The number of iterations of  RTI is the lowest possibly due to the larger number of constraints (in particular, LMIs) and decision variables.
On the other hand, the number of iterations of DST+DD is the highest possibly because the resulting OCP is SOCP-representable. Notice that the communication and actuation time is not considered here since all computations are performed in simulations on a single processor. Note, however, that all schemes are using the same ADMM algorithm and communicating the same information. Hence, their communication demand is expected to be almost the same.

\section{Conclusions}\label{conclusion}

A novel distributed MPC scheme with reconfigurable terminal sets is proposed for tracking piecewise constant references. The resulting OCP is amenable to distributed optimization techniques. The effectiveness of the proposed scheme is explored using a power network case study. It is found that the proposed scheme requires less computation, hence yields better performance compared to existing schemes if computation time is limited. Future work includes experimental demonstration of the proposed schemes.




\section*{APPENDIX}

We present here the distributed MPC scheme developed in \cite{aboudonia2020distributed} to compare it to the developed scheme \eqref{sec3_ocp} in Theorem~\ref{th_feasibility}. The online OCP in \cite{aboudonia2020distributed} is given by
\begin{equation}
\label{app_ocp}
\begin{aligned}
 &\min_{ \left\{ \begin{aligned}  x_i(t), u_i(t), \alpha_i \\ c_i, \rho_{ij}, \sigma_{ij}^k, \tau_{ij}^l \end{aligned} \right\} } \ \sum_{i=1}^M \sum_{t=0}^{T-1} \left[
\|x_{N_i}(t)\|^2_{Q_i} + \|u_i(t)\|^2_{R_i} \right]
+ \|x_i(T)\|^2_{P_i}, \\
& \text{s.t.} \left\{		
\begin{aligned}
& \eqref{sec2_dyn},\eqref{sec2_cons}, \eqref{app_ter} - \eqref{app_LMI3}, \ x_i(0)=x_i^\text{init} \\
& \rho_{ij} \geq 0, \ \sigma_{ij}^k \geq 0, \ \tau_{ij}^l \geq 0, \ \forall j \in \pazocal{N}_i \\
& \ \forall k \in \{0,...,q_i\}, \ \forall l \in \{1,...,r_i\} \\
& \ \forall t \in \{0,...,T-1\}, \ \forall i \in \{1,...,M\}, \ \forall j \in \pazocal{N}_i.
\end{aligned}
\right.
\end{aligned}
\end{equation}
\begin{equation}
    \label{app_ter}
    \begin{bmatrix}
        P_i^{-1} \alpha_i & x_i - c_i \\
        * & \alpha_i
    \end{bmatrix}
    \geq 0.
\end{equation}
\begin{equation}
	\label{app_LMI1}
	\begin{bmatrix}
	P_i^{-1} \alpha_i & (A_i + B_i K_i) \alpha_{N_i} &	(A_i + B_i K_i) c_{N_i} - c_i
	\\
	* & \sum_{j \in \pazocal{N}_i} \rho_{ij} P_{ij} & 0 \\
	* & * & \alpha_i - \sum_{j \in \pazocal{N}_i} \rho_{ij}
	\end{bmatrix}
	\geq 0.
\end{equation}
\begin{equation}
	\label{app_LMI2}
	\begin{aligned}
	\begin{bmatrix}
	\sum_{j\in\pazocal{N}_i}\sigma_{ij}^k P_{ij} & \frac{1}{2} \alpha_{N_i} G_{i_i}^{k^\top} \\
	* & g_{i}^k - G_i^k   c_{N_i} - \sum_{j\in\pazocal{N}_i}\sigma_{ij}^k
	\end{bmatrix}
	\geq 0.
	\end{aligned}
\end{equation}
\begin{equation}
	\label{app_LMI3}
	\begin{aligned}
	\begin{bmatrix}
	\sum_{j \in \pazocal{N}_i}\tau_{ij}^l P_{ij} & \frac{1}{2} \alpha_{N_i}
	K_i^\top  H_i^{l^\top} \\
	* & h_i^l - H_i^l K_i c_{N_i} - \sum_{j\in\pazocal{N}_i}\tau_{ij}^l
	\end{bmatrix}
	\geq 0.
	\end{aligned}
\end{equation}
According to \cite{aboudonia2020distributed}, \eqref{app_LMI1}, \eqref{app_LMI2} and \eqref{app_LMI3} are only sufficient conditions for the constraints in Proposition~\ref{sec3_prop}.


\section*{ACKNOWLEDGMENT}

We would like to thank Prof. Roy Smith and Dr. Georgios Darivianakis for the fruitful discussions on the topic.

\bibliographystyle{unsrt}
\bibliography{references}

\begin{thebibliography}{10}

\bibitem{maestre2014distributed}
Jos{\'e}~M Maestre and Rudy~R Negenborn, editors.
\newblock {\em Distributed model predictive control made easy}, volume~69.
\newblock Springer, 2014.

\bibitem{scattolini2009architectures}
Riccardo Scattolini.
\newblock Architectures for distributed and hierarchical model predictive
  control--a review.
\newblock {\em Journal of process control}, 19(5):723--731, 2009.

\bibitem{rawlings2017model}
James~Blake Rawlings, David~Q Mayne, and Moritz Diehl.
\newblock {\em Model predictive control: theory, computation, and design},
  volume~2.
\newblock Nob Hill Publishing Madison, WI, 2017.

\bibitem{christofides2013distributed}
Panagiotis~D Christofides, Riccardo Scattolini, David~Munoz de~la Pena, and
  Jinfeng Liu.
\newblock Distributed model predictive control: A tutorial review and future
  research directions.
\newblock {\em Computers \& Chemical Engineering}, 51:21--41, 2013.

\bibitem{kouvaritakis2016model}
Basil Kouvaritakis and Mark Cannon.
\newblock Model predictive control: Classical, robust and stochastic.
\newblock {\em Switzerland: Springer International Publishing}, page~38, 2016.

\bibitem{ferramosca2013cooperative}
Antonio Ferramosca, Daniel Lim{\'o}n, Ignacio Alvarado, and Eduardo~F Camacho.
\newblock Cooperative distributed mpc for tracking.
\newblock {\em Automatica}, 49(4):906--914, 2013.

\bibitem{conte2016distributed}
Christian Conte, Colin~N Jones, Manfred Morari, and Melanie~N Zeilinger.
\newblock Distributed synthesis and stability of cooperative distributed model
  predictive control for linear systems.
\newblock {\em Automatica}, 69:117--125, 2016.

\bibitem{trodden2017distributed}
Paul~A Trodden and Jose~Maria Maestre.
\newblock Distributed predictive control with minimization of mutual
  disturbances.
\newblock {\em Automatica}, 77:31--43, 2017.

\bibitem{lucia2015contract}
Sergio Lucia, Markus K{\"o}gel, and Rolf Findeisen.
\newblock Contract-based predictive control of distributed systems with plug
  and play capabilities.
\newblock {\em IFAC-PapersOnLine}, 48(23):205--211, 2015.

\bibitem{aboudonia2020distributed}
Ahmed Aboudonia, Annika Eichler, and John Lygeros.
\newblock Distributed model predictive control with asymmetric adaptive
  terminal sets for the regulation of large-scale systems.
\newblock {\em IFAC-PapersOnLine}, 53(2):6899--6904, 2020.

\bibitem{aboudonia2021distributed}
Ahmed Aboudonia, Annika Eichler, Francesco Cordiano, Goran Banjac, and John
  Lygeros.
\newblock Distributed model predictive control with reconfigurable terminal
  ingredients for reference tracking.
\newblock {\em IEEE Transactions on Automatic Control}, 2021.

\bibitem{darivianakis2019distributed}
Georgios Darivianakis, Annika Eichler, and John Lygeros.
\newblock Distributed model predictive control for linear systems with adaptive
  terminal sets.
\newblock {\em IEEE Transactions on Automatic Control}, 65(3):1044--1056, 2019.

\bibitem{conte2012distributed}
Christian Conte, Niklaus~R Voellmy, Melanie~N Zeilinger, Manfred Morari, and
  Colin~N Jones.
\newblock Distributed synthesis and control of constrained linear systems.
\newblock In {\em 2012 American Control Conference (ACC)}, pages 6017--6022.
  IEEE, 2012.

\bibitem{banjac2020improving}
Goran Banjac, Jianzhe Zhen, Dick~den Hertog, and John Lygeros.
\newblock Improving tractability of real-time control schemes via simplified
  $\pazocal{S}$-lemma.
\newblock {\em arXiv preprint arXiv:2012.04688}, 2020.

\bibitem{boyd2004convex}
Stephen Boyd and Lieven Vandenberghe.
\newblock {\em Convex optimization}.
\newblock Cambridge university press, 2004.

\bibitem{majumdar2020recent}
Anirudha Majumdar, Georgina Hall, and Amir~Ali Ahmadi.
\newblock Recent scalability improvements for semidefinite programming with
  applications in machine learning, control, and robotics.
\newblock {\em Annual Review of Control, Robotics, and Autonomous Systems},
  3:331--360, 2020.

\bibitem{ahmadi2019dsos}
Amir~Ali Ahmadi and Anirudha Majumdar.
\newblock {DSOS} and {SDSOS} optimization: more tractable alternatives to sum
  of squares and semidefinite optimization.
\newblock {\em SIAM Journal on Applied Algebra and Geometry}, 3(2):193--230,
  2019.

\bibitem{berman2003completely}
Abraham Berman and Naomi Shaked-Monderer.
\newblock {\em Completely positive matrices}.
\newblock World Scientific, 2003.

\bibitem{nedic2018distributed}
Angelia Nedi{\'c} and Ji~Liu.
\newblock Distributed optimization for control.
\newblock {\em Annual Review of Control, Robotics, and Autonomous Systems},
  1:77--103, 2018.

\bibitem{aybat2019distributed}
Necdet~Serhat Aybat and Erfan~Yazdandoost Hamedani.
\newblock A distributed {ADMM}-like method for resource sharing over
  time-varying networks.
\newblock {\em SIAM Journal on Optimization}, 29(4):3036--3068, 2019.

\bibitem{boyd2011distributed}
Stephen Boyd, Neal Parikh, Eric Chu, Borja Peleato, and Jonathan Eckstein.
\newblock Distributed optimization and statistical learning via the alternating
  direction method of multipliers.
\newblock 3(1):1--122, 2011.

\bibitem{banjac2019decentralized}
Goran Banjac, Felix Rey, Paul Goulart, and John Lygeros.
\newblock Decentralized resource allocation via dual consensus {ADMM}.
\newblock In {\em 2019 American Control Conference (ACC)}, pages 2789--2794.
  IEEE, 2019.

\bibitem{Lofberg2004yalmip}
Johan L{\"{o}}fberg.
\newblock {YALMIP} : A toolbox for modeling and optimization in {MATLAB}.
\newblock In {\em Proceedings of the CACSD Conference}, Taipei, Taiwan, 2004.

\bibitem{mosek}
MOSEK ApS.
\newblock {\em The MOSEK optimization toolbox for MATLAB manual. Version 9.0.},
  2019.

\bibitem{riverso2013plug}
Stefano Riverso, Marcello Farina, and Giancarlo Ferrari-Trecate.
\newblock Plug-and-play decentralized model predictive control for linear
  systems.
\newblock {\em IEEE Transactions on Automatic Control}, 58(10):2608--2614,
  2013.

\bibitem{DBLP:journals/corr/abs-1302-0226}
Stefano Riverso, Marcello Farina, and Giancarlo Ferrari{-}Trecate.
\newblock Plug-and-play decentralized model predictive control.
\newblock {\em ArXiv:}, 1302.0226, 2013.

\bibitem{souza2015discretisation}
Matheus Souza, Jos{\'e}~Claudio Geromel, Patrizio Colaneri, and Robert~N
  Shorten.
\newblock Discretisation of sparse linear systems: An optimisation approach.
\newblock {\em Systems \& Control Letters}, 80:42--49, 2015.

\end{thebibliography}

\end{document}